\documentclass[letterpaper,10pt,conference]{ieeeconf}

\IEEEoverridecommandlockouts
\usepackage[dvipdfmx]{graphicx}
\usepackage{amssymb,mathrsfs,amsmath}
\usepackage{unit}
\usepackage{color}
\usepackage{cite}

\newtheorem{theorem}{Theorem}
\newtheorem{lemma}{Lemma}

\newtheorem{remark}{Remark}

\newtheorem{definition}{Definition}

\def\K{\mathcal{K}}
\def\R{\mathcal{R}}
\def\G{\mathcal{L}}
\def\P{\mathcal{P}}
\def\E{\mathcal{E}}
\def\Id{\mathcal{I}}

\def\setD{\mathscr{D}}
\def\setF{\mathscr{F}}
\def\setL{\mathscr{L}}

\title{%
On Koopman Resolvents and Frequency Response of Nonlinear Systems
}

\author{%
Yoshihiko Susuki${}^1$, Natsuki Katayama${}^1$, Alexandre Mauroy${}^2$, and Igor Mezi\'c${}^3$
\thanks{%
The work was partially supported by JSPS Bilateral Collaborations (Grant No. JPJSBP120242202) and JST Moonshot R\&D Grant Number JPMJMS2284.
}%
\thanks{%
${}^1$Y. Susuki and N. Katayama are with the Department of Electrical Engineering, Kyoto University, Kyoto, Japan. 
{\tt\small \{susuki.yoshihiko.5c@, katayama.natsuki.46n@st.\}kyoto-u.ac.jp} 
}%
\thanks{%
${}^2$A. Mauroy is with the Department of Mathematics and Namur Center for Complex Systems (naXys), University of Namur, Belgium. 
{\tt\small alexandre.mauroy@unamur.be}
}%
\thanks{%
${}^3$I. Mezi\'c is with the Department of Mechanical Engineering, University of California, Santa Barbara, USA. 
{\tt\small mezic@ucsb.edu}
}%
}%

\begin{document}

\maketitle
\thispagestyle{empty}
\pagestyle{empty}

\begin{abstract}
This paper proposes a novel formulation of frequency response for nonlinear systems in the Koopman operator framework. 
This framework is a promising direction for the analysis and synthesis of systems with nonlinear dynamics based on (linear) Koopman operators. 
We show that the frequency response of a nonlinear plant is derived through the Laplace transform of the output of the plant, which is a generalization of the classical approach to LTI plants and is guided by the resolvent theory of Koopman operators. 
The response is a complex-valued function of the driving angular frequency, allowing one to draw the so-called Bode plots, which display the gain and phase characteristics. 
Sufficient conditions for the existence of the frequency response are presented for three classes of dynamics. 
\end{abstract}

\begin{keywords}
Nonlinear system, Koopman operator, Resolvent, Frequency response, Koopman mode
\end{keywords}

\section{Introduction}

The frequency response, which characterizes properties of dynamic systems, is a fundamental concept in control engineering. 
For LTI systems, the frequency response is defined as a complex-valued function of the driving (angular) frequency and is a standard tool for their analysis and synthesis \cite{callier1991linear}. 
For nonlinear systems, the frequency response has been studied from multiple directions. 
Basically, due to the presence of nonlinearity, the phenomenological point of view has been addressed, that is, how to locate a {steady-state periodic function} 
under a periodic excitation, including the harmonic balance method \cite{hayashi1964nonlinear}, describing function \cite{khalil2002nonlinear}, Volterra series \cite{brockett1976volterra}, frequency response function \cite{pavlov2007frequency,nakamura2025}, and so on. 
As far as we have surveyed in the literature, their  treatment is mainly based on the time-domain formulation or the state-space formulation, which is contrary to the classical treatment of LTI systems in the frequency domain.
If a similar frequency response guided by a rigorous theory is established for nonlinear systems, then it will provide principles and methodologies for analyzing and synthesizing the nonlinear systems in a systematic and physically intuitive manner.

In \cite{susuki2021koopmana}, we reported the theory of Laplace-domain analysis of nonlinear systems in the Koopman operator framework. 
The framework is a promising direction of the analysis and synthesis of systems with nonlinear dynamics based on (linear) Koopman operators: see, e.g., \cite{mezic2013analysis,mauroy2020koopman,otto2021koopmana,bevanda2021koopman,brunton2022modern}. 
The Koopman operator is a composition operator based on a (possibly nonlinear) transformation \cite{lasota1994chaos}. 
It can {\color{black}preserve} the complete information of the underlying transformation with a linear (but infinite-dimensional) setting, mirroring the classical approach to LTI systems. 
In \cite{susuki2021koopmana}, we showed that, for a continuous-time nonlinear autonomous system with a scalar-valued output, the Laplace transform of the output is represented by the action of the so-called resolvent (operator) of a Koopman generator, which we termed the \emph{Koopman resolvent}. 
This theoretically shows that the Laplace transformation can {\color{black}extract} 
properties of the nonlinear system through the Koopman resolvent.  
Note that the parallel theory to discrete-time nonlinear autonomous systems, that is, $z$-domain analysis of nonlinear systems, was also reported in \cite{susuki2024koopman}. 

The purpose of this paper is to formulate the frequency response of nonlinear systems using the Koopman resolvent. 
Specifically, we propose a novel formulation of the frequency response based on the Laplace-domain representation of nonlinear systems derived in \cite{susuki2021koopmana}. 
To achieve this, we introduce the skew-product form of a nonlinear plant with a periodic input, which converts this non-autonomous system into an equivalent autonomous system.
We then investigate spectral properties of the Koopman resolvent in the (autonomous) skew-product form to formulate the frequency response. 
The main contributions of this paper are summarized as follows:
\begin{enumerate}
\item A novel formulation of the frequency response guided by the theory of Koopman resolvents is presented. 
It is directly related to the spectral properties of the Koopman resolvent and is a generalization of the classical approach to LTI systems. 
\item The frequency response is a complex-valued function of the driving angular frequency, allowing one to draw the so-called Bode plots, which display the gain and phase characteristics. 
\item Sufficient conditions for the existence of the frequency response are presented for three classes of dynamics. 
\end{enumerate}

The rest of this paper is organized as follows. 
Section~\ref{sec:prel} provides the system model and a brief introduction to the Koopman resolvent. 
Section~\ref{sec:form} develops a novel formulation of the frequency response using the Koopman resolvent. 
Solvable examples of the response are also presented. 
Section~\ref{sec:spectral} discusses conditions for utilizing the developed formulation. 
Section~\ref{sec:outro} concludes this paper with a summary and follow-up studies.

\section{Preliminaries}
\label{sec:prel}

\subsection{System model}
This paper considers the continuous-time nonlinear plant, given in the following state-space realization: 
\begin{equation}
\left.
\begin{aligned}
\frac{\dd x}{\dd t} := \dot{x} &= F(x,u), \quad
x\in\mathbb{R}^d, u\in\mathbb{U}\subseteq\mathbb{R} \\
y &= g(x,u), \quad y\in\mathbb{R}
\end{aligned}
\right\},
\label{eq:plant}
\end{equation}
where $x$ is the state of the system, $u$ is the control input, $F:\mathbb{R}^d\times\mathbb{U}\to\mathbb{R}^d$ is a nonlinear vector-valued function, and $g: \mathbb{R}^d\times\mathbb{U}\to\mathbb{R}$ is a scalar-valued fuction called the \emph{observable}.  
The purpose of this paper is to investigate the frequency response of the nonlinear plant \eqref{eq:plant}.

\subsection{Brief introduction to Koopman resolvents}

Here, we summarize the Koopman resolvent as the main mathematical tool in this paper. 
Consider the nonlinear plant without control, that is, $\dot{x}=F(x,u=0)=:F_0(x)$. 
For the current analysis, we assume that the time-forward solution of $\dot{x}=F_0(x)$, denoted as $x(t)={S}^t(x_0), t\geq 0$, starting at an initial state $x_0\in\mathbb{R}^d$ exists uniquely.
The solution induces the one-parameter family of maps ${S}^t: \mathbb{R}^d\to\mathbb{R}^d$, $t\geq 0$, called the semi-flow $\left({S}^t\right)_{t\geq 0}$. 
In the Koopman operator framework, we choose an initial observable $f$ as a scalar-valued function defined on the state space $\mathbb{R}^d$ and analyze its positive semi-orbit generated by the semi-flow through the \emph{composition}, $\{f\circ{S}^t, t\geq 0\}$. 
The semi-orbit is modeled by a one-parameter family of \emph{Koopman operators} for the nonlinear system $\dot{x}=F_0(x)$. 
\begin{definition}
Consider a (Banach) space $\setF$ of observables $f: \mathbb{R}^d\to\mathbb{C}$.  
The family of Koopman operators $\K^t: \setF\to\setF$ associated with the family of maps ${S}^t:\mathbb{R}^d\to\mathbb{R}^d$, $t\geq 0$ is defined through the composition
\[
\K^tf := f\circ{S}^t, \quad \forall f\in\setF. 
\]
\end{definition}
\begin{remark}
From the above argument, $\left({S}^t\right)_{t\geq 0}$ is a semigroup, which implies that $\left(\K^t\right)_{t\geq 0}$ is a semigroup of Koopman operators, called the Koopman semigroup.  
It is essential to state that although the semi-flow is nonlinear, the Koopman semigroup is linear in the sense that $\K^t$ \emph{linearly} acts on the observable space $\setF$. 
\end{remark}

Here, by virtue of the semigroup theory, we introduce the generator for the Koopman semigroup. 
For a strongly continuous semigroup $\left(\K^t\right)_{t\geq 0}$ (see, e.g., \cite{engel2006oneparameter} for its definition), the limit
\[
\lim_{t\downarrow 0}\frac{\K^tf-f}{t} =: \G f, \quad \forall f\in\setD(\G) 
\]
exists in the strong sense, where the domain $\setD(\G)$ of a linear operator $\G$ is a dense set in $\setF$. 
This linear operator induced here is the infinitesimal generator $\G: \setD(\G)\to\setF$ of the Koopman semigroup, called the \emph{Koopman generator}. 
For the nonlinear system $\dot{x}=F_0(x)$, where $x\in\mathbb{R}^d$, the Koopman generator $\G$ is represented through the gradient operator $\nabla_x$ in $\mathbb{R}^d$ as
\[
(\G f)(x)=F_0(x)\cdot\nabla_x f(x), \quad \forall f\in\setD(\G),
\]
where ``$\cdot$" stands for the standard inner product in $\mathbb{R}^d$. 

Finally, we introduce the central mathematical tool in this paper---the Koopman resolvent. 
To this end, let us define the resolvent set and spectrum of the Koopman generator $\G$ for the nonlinear system $\dot{x}=F_0(x)$.
\begin{definition}
\label{def:resolvent_spectrum}
The resolvent set $\rho(\G)$ of the Koopman generator $\G: \setD(\G)\to\setF$ is the set of complex values $\lambda$ such that the inverse of the linear operator $\lambda\mathcal{I}-\G$, denoted as $(\lambda\mathcal{I}-\G)^{-1}$, exists and is bounded on $\setF$, where $\mathcal{I}$ is the identity operator.
Furthermore, the spectrum $\sigma(\G)$ of $\G: \setD(\G)\to\setF$ is defined as the complement set of $\rho(\G)$: $\sigma(\G):=\mathbb{C}\setminus\rho(\G)$.
\end{definition}
\begin{definition}
Suppose that $\rho(\G)$ is not empty. 
The resolvent (operator) of the Koopman generator $\G: \setD(\G)\to\setF$, which we call the Koopman resolvent $\R (s; \G): \setF\to\setF$, is defined as 
\[
\R (s; \G):=(s\mathcal{I}-\G)^{-1}, \quad \forall s\in\rho(\G).
\]
\end{definition}

The spectrum $\sigma(\G)$ is decomposed into the three disjoint sets: point, continuous, and residual spectra (see, e.g., \cite{curtain1995introduction}).  
A complex value $\lambda\in\sigma(\G)$ is in the point spectrum of $\G$, denoted as $\sigma\sub{p}(\G)$, if $\lambda\mathcal{I}-\G$ is not injective. 
Such $\lambda$ is called the \emph{Koopman eigenvalue} of $\G$, so that there exists a non-zero $\phi_\lambda\in\setD(\G)\subset\setF$, called the associated \emph{Koopman eigenfunction}, satisfying
\[
\G\phi_\lambda=\lambda\phi_\lambda.
\]
A complex value $\lambda\in\sigma(\G)$ is in the continuous spectrum if $\lambda\mathcal{I}-\G$ is injective, its range is dense in $\setF$, but $(\lambda\mathcal{I}-\G)^{-1}$ is unbounded.  
The residual spectrum does not appear in this paper and is therefore not considered.

\section{Formulation of frequency response}
\label{sec:form}

This section states the main contribution of this paper: utilizing the Koopman resolvent to formulate the frequency response of the nonlinear plant \eqref{eq:plant} through the skew-product form. 
The skew-product form is studied in operator theory of dynamical systems: see, e.g., \cite{giannakis2023koopman}. 

\subsection{The skew-product form of the forced nonlinear plant}

With the frequency response in mind, we consider the single-frequency excitation $u(t)=u_0\ee^{\ii\omega t}$ to the nonlinear plant \eqref{eq:skew1} and construct the equivalent skew-product form as follows:
\begin{equation}
\left.
\begin{aligned}
\dot{x} &= F(x,u) \\
\dot{u} &= \ii\omega u \\
y &= g(x,u)
\end{aligned}
\right\}, \quad
\ii:=\sqrt{-1}.
\label{eq:skew1}
\end{equation}
Now, the Koopman generator for the skew-product form \eqref{eq:skew1} under a suitable choice of the observable space $\setF$ is introduced through the gradient operator $\nabla_x$ and the differential operator $\DD/\DD u$ as
\[
\begin{aligned}
\G\sub{forced} :=& F(x,u)\cdot \nabla_x + (\ii\omega u)\frac{\DD}{\DD u},\\
& \makebox[3em]{}\G\sub{forced}: \setD(\G\sub{forced})\to\setF,
\end{aligned}
\]
and the associated Koopman resolvent is also introduced as
\begin{equation}
\R(s; \G\sub{forced}) = \left(s\Id-\G\sub{forced}\right)^{-1},  
\forall s\in\rho(\G\sub{forced}).
\label{eq:resolvent}
\end{equation}
Then, by using the theory presented in \cite{susuki2021koopmana}, the Laplace transform $\hat{y}(s; x_0,u_0)$ of the output $y(t)=g(x(t),u(t))$ $(t\geq 0)$ for \eqref{eq:skew1} with an initial state $(x(0),u(0))=(x_0,u_0)$ is represented by the action of the Koopman resolvent for the skew-product form to the observable $g$ as follows:
\begin{equation}
\hat{y}(s; x_0,u_0) = [\R(s; \G\sub{forced})g](x_0,u_0).
\label{eq:Laplace}
\end{equation}
This Laplace transform is equivalent to the output response of the nonlinear plant \eqref{eq:plant} with the periodic input $u(t)=u_0\ee^{\ii\omega t}$, which is the basis for our development of the frequency response.

\subsection{Main proposal}

First, we can motivate our developments in this paper by using the following fact. 
Assume that the Poincar\'e map of the skew form \eqref{eq:skew1} has a globally attracting fixed point. 
Then, the skew form \eqref{eq:skew1} has a periodic orbit of fundamental period $T=2\pi/\omega$ or $T/n$ for some $n\in \mathbb{N}$. 
This periodic orbit will be associated with a steady-state periodic output of the nonlinear plant \eqref{eq:plant} with the periodic input $u(t)=u_0\ee^{\ii\omega t}$, which we will revisit in Section~\ref{subsec:stable}. 

Thus, we provide our working definitions on the frequency response: the fundamental and harmonic cases (Definition~\ref{def:FR}) and the subharmonic case (Definition~\ref{def:FR_subharmonics}).
\begin{definition}
\label{def:FR}
Assume that the nonlinear plant \eqref{eq:plant} with the input $u(t)=u_0\ee^{\ii\omega t}$ $(u_0\neq 0,\omega>0)$ and the output $y=g(x,u)$ has a {steady-state} 
periodic output $y_n(t)$ with the angular frequency $n\omega$ $(n\in\mathbb{N})$. 
Then, the frequency response of the nonlinear plant \eqref{eq:plant} is defined as the coefficient ${\color{black}H_n(\omega; g, x_0, u_0)}$ in the following representation of $y_n(t)$:
\begin{equation}
y_n(t)={\color{black}H_n(\omega; g, x_0, u_0)}(u_0\ee^{\ii\omega t})^n.
\label{eq:def_H}
\end{equation}
\end{definition}
\begin{definition}
\label{def:FR_subharmonics}
Assume that the nonlinear plant \eqref{eq:plant} with the input $u(t)=u_0\ee^{\ii\omega t}$ $(u_0>0,\omega>0)$ and the output $y=g(x,u)$ has a {steady-state} 
periodic output $y_{1/n}(t)$ with the angular frequency $\omega/n$ $(n\in\mathbb{N})$. 
Then, the frequency response of the nonlinear plant \eqref{eq:plant}  is defined as the coefficient ${\color{black}H_{1/n}(\omega; g, x_0, u_0)}$ in the following representation of $y_{1/n}(t)$:
\begin{equation}
y_{1/n}(t)={\color{black}H_{1/n}(\omega; g, x_0, u_0)}(u_0\ee^{\ii\omega t})^{1/n}.
\label{eq:def_H_subharmonics}
\end{equation}
\end{definition}
\begin{remark}
The notion of {``steady state"} 
employed here requires clarification. 
For example, if the state trajectory $x(t)$ exhibits a stable periodic solution in the sense of Lyapunov (see Section~\ref{sec:spectral}), and the observable $f$ is linear, then the corresponding output $y(t)$ will {approach a steady-state} 
periodic function. 
The concept of pullback attractors for non-autonomous systems \cite{caraballo2016applied} can be utilized to define the steady state rigorously. 
\end{remark}
%
%
\begin{remark}
In this discussion, we restrict our attention to the periodic response of the forced nonlinear system. 
Consequently, other possible behaviors---such as quasi-periodic or chaotic oscillations---are excluded from consideration. 
Moreover, the relations $\color{black}H_n(i\omega; g, x_0, u_0)$ and $\color{black}H_{1/n}(i\omega; g, x_0, u_0)$ may not hold for all $\omega \in (0, \infty)$, due to the occurrence of bifurcation phenomena. 
For example, if the saddle-node bifurcation for the target periodic output occurs at a specific finite value of $\omega$, then the associated frequency response is not defined beyond that value. 
\end{remark}

Here, we analyze the frequency response in terms of spectral properties of the Koopman generator $\G\sub{forced}$ for the skew-product form \eqref{eq:skew1}. 
First, we address the fundamental and harmonic cases, that is, the angular frequencies of $\omega, 2\omega, 3\omega,\ldots$ 
The angular frequency $n\omega$ $(n\in\mathbb{N})$ of the periodic output \eqref{eq:def_H} is related to the point spectrum of $\G\sub{forced}$. 
\begin{lemma}
\label{lemma:point}
If the mapping $(x,u)\mapsto u^n$ $(n\in\mathbb{N})$ belongs to $\setF$, then $\ii n\omega$ belongs to the point spectrum $\sigma\sub{p}(\G\sub{forced})$ of the Koopman generator $\G\sub{forced}$. 
The associated Koopman eigenfunction $\phi_{\ii n\omega}(x,u)$ is given by
\[
\phi_{\ii n\omega}(x,u)=u^n.
\] 
\end{lemma}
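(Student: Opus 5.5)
The plan is a direct verification that $\phi(x,u):=u^n$ is an eigenfunction of $\G\sub{forced}$ with eigenvalue $\ii n\omega$. We check two things: that $\phi$ lies in the domain $\setD(\G\sub{forced})$, and that $\G\sub{forced}\phi=\ii n\omega\,\phi$ with $\phi\neq 0$. Together these show that $\ii n\omega\Id-\G\sub{forced}$ is not injective, so $\ii n\omega\in\sigma\sub{p}(\G\sub{forced})$ by the definition recalled in Section~\ref{sec:prel}.

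First, we compute the Koopman semigroup action rather than only the differential expression, since this also handles domain membership. The flow of the skew-product form \eqref{eq:skew1} starting at $(x_0,u_0)$ has $u$-component $u(t)=u_0\ee^{\ii\omega t}$, regardless of $x$, because the $u$-equation is decoupled. Hence
\[
(\K^t_{\rm forced}\phi)(x_0,u_0)=\bigl(u_0\ee^{\ii\omega t}\bigr)^n=\ee^{\ii n\omega t}\phi(x_0,u_0).
\]
So $\K^t_{\rm forced}\phi=\ee^{\ii n\omega t}\phi$ for all $t\geq 0$. Because $\phi\in\setF$ by hypothesis, the difference quotient is $(\K^t_{\rm forced}\phi-\phi)/t=\frac{\ee^{\ii n\omega t}-1}{t}\phi$. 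This is a scalar multiple of a fixed element of $\setF$, and it converges in norm to $\ii n\omega\phi$ as $t\downarrow 0$. Therefore $\phi\in\setD(\G\sub{forced})$ and $\G\sub{forced}\phi=\ii n\omega\phi$.

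As a consistency check, the same result follows from the differential form of the generator. We have $F\cdot\nabla_x u^n=0$, since $u^n$ does not depend on $x$, and $(\ii\omega u)\,\DD u^n/\DD u=\ii n\omega u^n$.

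Finally, $\phi$ is not the zero element of $\setF$, since $u^n\not\equiv 0$ on $\mathbb{R}^d\times\mathbb{U}$ when $\mathbb{U}$ contains a nonzero value; this is implicit in the setup $u_0\neq 0$. Thus $\phi$ is a genuine Koopman eigenfunction.

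The main subtlety is the domain membership step. The differential formula for $\G\sub{forced}$ is only valid on $\setD(\G\sub{forced})$, so membership has to be justified. The semigroup computation above does this using only the hypothesis $\phi\in\setF$ and strong continuity. No further regularity assumptions on $\setF$ are needed, because the orbit of $\phi$ stays on the one-dimensional span of $\phi$.
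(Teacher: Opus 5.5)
Your proof is correct and follows the same route as the paper, which simply checks $\mathcal{L}_{\rm forced}u^n = \mathrm{i}n\omega\,u^n$ directly from the definition. Your semigroup computation $\mathcal{K}^t_{\rm forced}\phi = e^{\mathrm{i}n\omega t}\phi$ is a useful refinement, because it also shows $\phi\in\mathscr{D}(\mathcal{L}_{\rm forced})$, which the paper's one-line check leaves implicit.
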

\begin{proof}
It is straightforward by checking the definition $\G\sub{forced}\phi_{\ii n\omega}={\ii n\omega}\phi_{\ii n\omega}$.
\end{proof}
Second, we address the subharmonic case, that is, the angular frequencies of $\omega/2,\omega/3,\ldots$. 
The angular frequency $\omega/n$ is also related to the point spectrum of $\G\sub{forced}$.
\begin{lemma}
\label{lemma:point_subharmonics}
If the mapping $(x,u)\mapsto u^{1/n}$ $(n\in\mathbb{N})$ belongs to $\setF$, then $\ii \omega/n$ belongs to the point spectrum $\sigma\sub{p}(\G\sub{forced})$ of the Koopman generator $\G\sub{forced}$. 
The associated Koopman eigenfunction $\phi_{\ii \omega/n}(x,u)$ is given by
\[
\phi_{\ii \omega/n}(x,u)=u^{1/n}.
\] 
\end{lemma}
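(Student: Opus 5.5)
The plan is to verify the eigenvalue equation $\G\sub{forced}\phi=\ii(\omega/n)\phi$ directly for $\phi(x,u)=u^{1/n}$, in the same way as Lemma~\ref{lemma:point}. The one point needing care is that $u^{1/n}$ is multivalued on the complex $u$-plane. I would therefore first fix a branch. Along the flow $u(t)=u_0\ee^{\ii\omega t}$ with $u_0>0$ (as assumed in Definition~\ref{def:FR_subharmonics}), I would write $u=|u|\ee^{\ii\theta}$. I would then define $u^{1/n}:=|u|^{1/n}\ee^{\ii\theta/n}$, with $\theta$ understood as the phase that is carried continuously along the orbit, that is, on the lift of the forcing circle, equivalently a covering of it. Locally, away from $u=0$, this is a holomorphic function of $u$. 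Since $u_0\neq 0$, the orbit never meets $u=0$.

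With the branch fixed, the computation is immediate. The function $\phi$ does not depend on $x$, so $F(x,u)\cdot\nabla_x\phi=0$. For the $u$-part, $\DD\phi/\DD u=\frac{1}{n}u^{1/n-1}$ on the chosen branch, so
\[
(\ii\omega u)\frac{\DD\phi}{\DD u}=\ii\omega u\cdot\frac{1}{n}u^{1/n-1}=\ii\frac{\omega}{n}\,u^{1/n}.
\]
Hence $\G\sub{forced}\phi=\ii(\omega/n)\phi$. By hypothesis $\phi\in\setF$, and $\phi$ is nonzero. Together with the computation above this shows $\phi\in\setD(\G\sub{forced})$, and so $\ii\omega/n\in\sigma\sub{p}(\G\sub{forced})$.

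As a cross-check that does not rely on the differential form of the generator, I would verify the semigroup relation directly. Along the flow, $\K^t\phi(x_0,u_0)=(u_0\ee^{\ii\omega t})^{1/n}=\ee^{\ii\omega t/n}\phi(x_0,u_0)$, where the branch is continued continuously in $t$. Differentiating at $t=0$ in the strong sense gives the same eigenvalue. This also confirms that $\phi$ lies in the domain of the generator whenever it lies in $\setF$ and the semigroup is strongly continuous.

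The main obstacle is the branch issue. After one forcing period $T$ the phase has advanced by $2\pi$, so $u^{1/n}$ returns multiplied by $\ee^{2\pi\ii/n}\neq 1$. A single-valued $u^{1/n}$ on the circle $|u|=u_0$ therefore cannot be invariant in the naive sense. I would handle this by working on the $n$-fold cover of the $u$-circle, i.e.\ treating the phase as a real variable. This is implicitly what the hypothesis ``$(x,u)\mapsto u^{1/n}$ belongs to $\setF$'' asks for. I would state this interpretation explicitly rather than try to construct a global single-valued root.
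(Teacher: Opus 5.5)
Your proposal is correct and takes the same route as the paper, whose proof is just the direct check $\G_{\mathrm{forced}}\phi=(i\omega/n)\phi$ for $\phi=u^{1/n}$. Your additions refine that check rather than change the argument: the identity $\K^t\phi=e^{i\omega t/n}\phi$ settles membership in $\setD(\G_{\mathrm{forced}})$, and your observation that $u^{1/n}$ picks up the factor $e^{2\pi i/n}$ after one forcing period, so it must be read on a covering of the $u$-circle, makes explicit a subtlety that the paper leaves inside the hypothesis $u^{1/n}\in\setF$.
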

\begin{proof}
It is straightforward by checking the definition $\G\sub{forced}\phi_{\ii \omega/n}=({\ii \omega/n})\phi_{\ii \omega/n}$.
\end{proof}

Now, we state the main results that the frequency responses ${\color{black}H_n(\omega; g, x_0, u_0)}$ and ${\color{black}H_{1/n}(\omega; g, x_0, u_0)}$ can be characterized with the Laplace transform \eqref{eq:Laplace} of the output $y(t)$ through the Koopman resolvent \eqref{eq:resolvent}.
\begin{theorem}
\label{thm:FR}
Consider the nonlinear plant \eqref{eq:plant} with the input $u(t)=u_0\ee^{\ii\omega t}$ and the output $y=g(x,u)$, and associated Koopman resolvent $\R(s; \G\sub{forced})$ through the skew-product form \eqref{eq:skew1}. 
If $\ii n\omega$ $(n\in\mathbb{N})$ is a pole of order 1 for $\R(s; \G\sub{forced})$, then the frequency response ${\color{black}H_n(\omega; g, x_0, u_0)}$ of the nonlinear plant \eqref{eq:plant} is represented as follows:
\begin{align}
{\color{black}H_n(\omega; g, x_0, u_0)} 
&= u_0^{-n}\oint_{|s-\ii n\omega|=\varepsilon} \hat{y}(s; x_0,u_0)\frac{\dd s}{2\pi\ii}
\notag\\
&\makebox[-15mm]{}= u_0^{-n}\oint_{|s-\ii n\omega|=\varepsilon} [\R(s; \G\sub{forced})g](x_0,u_0)\frac{\dd s}{2\pi\ii},
\label{eq:projection}
\end{align}
where $\varepsilon$ is taken appropriately so that the circle does not contain any eigenvalue except for $\ii n\omega$, or equivalently, 
\begin{align}
{\color{black}H_n(\omega; g, x_0, u_0)} & \notag\\
& \makebox[-15mm]{}= u_0^{-n}
\lim_{s\to\ii n\omega}(s-\ii n\omega)[\R(s; \G\sub{forced})g](x_0,u_0).
\label{eq:residue}
\end{align}
\end{theorem}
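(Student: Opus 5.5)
The plan is to exploit the spectral decomposition of the Koopman resolvent near an isolated simple pole, combined with the Laplace-domain representation \eqref{eq:Laplace}. Since $\ii n\omega$ is a pole of order 1 of $\R(s;\G\sub{forced})$, it is an isolated point of $\sigma(\G\sub{forced})$. By Lemma~\ref{lemma:point} it lies in the point spectrum, with eigenfunction $\phi_{\ii n\omega}(x,u)=u^n$. The Riesz projection
\[
\P_{\ii n\omega}:=\oint_{|s-\ii n\omega|=\varepsilon}\R(s;\G\sub{forced})\frac{\dd s}{2\pi\ii}
\]
is therefore a well-defined bounded projection, with $\varepsilon$ chosen as in the statement. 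For a pole of order 1, the standard Laurent expansion of the resolvent gives $\R(s;\G\sub{forced})=(s-\ii n\omega)^{-1}\P_{\ii n\omega}+\mathcal{H}(s)$, where $\mathcal{H}$ is holomorphic near $\ii n\omega$. It also gives $\G\sub{forced}\P_{\ii n\omega}=\ii n\omega\P_{\ii n\omega}$, so the range of $\P_{\ii n\omega}$ consists of eigenfunctions at $\ii n\omega$.

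The key steps, in order, are as follows.

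\textbf{Step 1 (equivalence of the two formulas).} The Laurent expansion makes the equivalence of \eqref{eq:projection} and \eqref{eq:residue} immediate: both equal $u_0^{-n}[\P_{\ii n\omega}g](x_0,u_0)$. The contour integral of $\hat y$ coincides with that of $[\R g](x_0,u_0)$ by \eqref{eq:Laplace}. Point evaluation commutes with the contour integral, which I will assume on the grounds that $\setF$ has continuous point evaluations, or alternatively by working pointwise with the Laplace integral.

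\textbf{Step 2 (projection equals the Fourier component at $n\omega$).} I will show that $[\P_{\ii n\omega}g](x_0,u_0)$ is exactly the coefficient $c$ in the steady-state component $y_n(t)=c\,\ee^{\ii n\omega t}$ of the output. To do this, decompose $y(t)=y_n(t)+r(t)$, where $r$ collects the transient part and the other frequency components. Then $\hat y(s)=c/(s-\ii n\omega)+\hat r(s)$. The residue at $\ii n\omega$ is $c$, provided $\hat r$ is holomorphic near $\ii n\omega$. Writing $y_n(t)=H_n u_0^n\ee^{\ii n\omega t}$ as in Definition~\ref{def:FR}, we get $c=H_nu_0^n$, and dividing by $u_0^n\neq0$ yields the claim.

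An operator-theoretic version of the same step runs as follows. The function $\P_{\ii n\omega}g$ is an eigenfunction, so $[\K^t\P_{\ii n\omega}g](x_0,u_0)=\ee^{\ii n\omega t}[\P_{\ii n\omega}g](x_0,u_0)$, which is precisely an oscillation at frequency $n\omega$. The complementary part $(\Id-\P_{\ii n\omega})g$ has Laplace transform $\mathcal{H}(s)g$, which is holomorphic near $\ii n\omega$. Hence it contributes no term at that frequency.

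\textbf{Main obstacle.} The hard part is the identification in Step 2: showing that the remainder $r$ carries no residue at $\ii n\omega$. Equivalently, one must match the eigen-component $\P_{\ii n\omega}g$ with the steady-state output $y_n$ of Definition~\ref{def:FR} rather than with some other object. I would handle this by using the simple-pole hypothesis to say that the restricted resolvent $\R(s;\G\sub{forced})(\Id-\P_{\ii n\omega})$ is analytic at $\ii n\omega$. I would then interpret the non-projected part of $y$ as the transient plus the components at other eigenvalues.

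Uniqueness of the decomposition then pins down $c$. Here I would use that $\ee^{\ii n\omega t}$ is the only part of $y$ whose Laplace transform is singular at $\ii n\omega$, and note that distinct frequencies give distinct poles. I would treat the well-posedness of $\hat y$ in a right half-plane and its analytic continuation up to the imaginary axis as given by the theory cited in \eqref{eq:Laplace}.
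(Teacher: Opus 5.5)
Your proposal is correct and follows essentially the same route as the paper. In both arguments, the contour integral is the eigenprojection from Kato's Laurent expansion of the resolvent at a pole of order 1, so the residue of $\hat y$ at $\ii n\omega$ is $[\P_{\ii n\omega}g](x_0,u_0)$, and inverse-Laplace-transforming the polar term $u_0^nH_n/(s-\ii n\omega)$ gives the steady-state term of Definition~\ref{def:FR}. The one difference is that the paper also writes $\P_{\ii n\omega}g=\phi_{\ii n\omega}V_n=u^nV_n$, tacitly assuming the eigenspace is spanned by $u^n$, to identify $H_n$ with the Koopman mode $V_n$; your direct identification at $(x_0,u_0)$ does not need that assumption.
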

\begin{proof}
The integral in the right-hand side of \eqref{eq:projection} is the resolvent-based representation of the eigenprojection $\P_{\ii n\omega}g$ of the observable $g$ onto the eigenspace of $\G\sub{forced}$ spanned by the eigenfunction $\phi_{\ii n\omega}$ (see page 178 of \cite{kato1995perturbation}), given by
\[
(\P_{\ii n\omega}g)(x_0,u_0)=\phi_{\ii n\omega}(x_0,u_0)V_n=u_0^nV_n,
\]
where $V_n\in\mathbb{C}$ is called the Koopman mode \cite{rowley2009spectral}, and we used $\phi_{\ii n\omega}(x,u)=u^n$ from Lemma~\ref{lemma:point}.  
That is, we have
\[
{\color{black}H_n(\omega; g, x_0, u_0)}=V_n, 
\]
implying that the frequency response is the Koopman mode. 
Here, $(\P_{\ii n\omega}g)(x_0,u_0)$ is the residue of the Laplace transform $\hat{y}(s; x_0,u_0)=[\R(s; \G\sub{forced})g](x_0,u_0)$ with respect to the pole $\ii n\omega$.   
Now, by using the expansion of the resolvent operator around the pole $\ii n\omega$ of order 1 (see page 180 of \cite{kato1995perturbation}), 
we have
\[
\hat{y}(s; x_0,u_0)
= \frac{u_0^n{\color{black}H_n(\omega; g, x_0, u_0)}}{s-\ii n\omega}+ (\textrm{the residual term}),
\]
where the residual term does not possess the pole $\ii n\omega$. 
By taking the inverse Laplace transformation of the first term on the right-hand side, we derive \eqref{eq:def_H}. 
In addition, Equation \eqref{eq:residue} follows immediately from the above, which is a standard technique in complex analysis and control engineering for calculating the residue.
\end{proof}
\begin{theorem}
\label{thm:FR_subharmonics}
Consider the nonlinear plant \eqref{eq:plant} with the input $u(t)=u_0\ee^{\ii\omega t}$ and the output $y=g(x,u)$, and  associated Koopman resolvent $\R(s; \G\sub{forced})$ through the skew-product form \eqref{eq:skew1}. 
If $\ii \omega/n$ $(n\in\mathbb{N})$ is a pole of order 1 for $\R(s; \G\sub{forced})$, then the frequency response ${\color{black}H_{1/n}(\omega; g, x_0, u_0)}$ of the nonlinear plant \eqref{eq:plant} is represented as follows:
\[
\begin{aligned}
{\color{black}H_{1/n}(\omega; g, x_0, u_0)} 
&= u_0^{-1/n}\oint_{|s-\ii \omega/n|=\varepsilon} \hat{y}(s; x_0,u_0)\frac{\dd s}{2\pi\ii}
\notag\\
&\makebox[-15mm]{}= u_0^{-1/n}\oint_{|s-\ii \omega/n|=\varepsilon} [\R(s; \G\sub{forced})g](x_0,u_0)\frac{\dd s}{2\pi\ii},
\end{aligned}
\]
where $\varepsilon$ is taken appropriately so that the circle does not contain any eigenvalue except for $\ii \omega/n$, or equivalently, 
\[
\begin{aligned}
{\color{black}H_{1/n}(\omega; g, x_0, u_0)} & \\
&\makebox[-20mm]{}= 
u_0^{-1/n}
\lim_{s\to \ii\omega/n}\left(s-\ii \frac{\omega}{n}\right)[\R(s; \G\sub{forced})g](x_0,u_0).
\end{aligned}
\]
\end{theorem}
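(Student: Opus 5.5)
The proof will follow the proof of Theorem~\ref{thm:FR} step for step, with Lemma~\ref{lemma:point_subharmonics} used in place of Lemma~\ref{lemma:point}.

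First, by Lemma~\ref{lemma:point_subharmonics}, $\ii\omega/n$ is a Koopman eigenvalue of $\G\sub{forced}$ with eigenfunction $\phi_{\ii\omega/n}(x,u)=u^{1/n}$. Here I take a fixed branch of the $n$th root, chosen consistently along the orbit $u(t)=u_0\ee^{\ii\omega t}$. The hypothesis that $\ii\omega/n$ is a pole of order 1 of $\R(s;\G\sub{forced})$ makes it an isolated point of the spectrum. I then choose $\varepsilon$ so that the circle $|s-\ii\omega/n|=\varepsilon$ lies in $\rho(\G\sub{forced})$ and encloses no other spectral point. The Riesz--Kato contour integral
\[
\P_{\ii\omega/n}:=\oint_{|s-\ii\omega/n|=\varepsilon}\R(s;\G\sub{forced})\frac{\dd s}{2\pi\ii}
\]
is then the eigenprojection onto the generalized eigenspace (page 178 of \cite{kato1995perturbation}). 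Because the pole has order 1, the nilpotent part vanishes (page 180 of \cite{kato1995perturbation}). Hence the range of $\P_{\ii\omega/n}$ consists of genuine eigenfunctions.

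Second, I will assume, as in Theorem~\ref{thm:FR}, that this eigenspace is spanned by $\phi_{\ii\omega/n}$. Then $(\P_{\ii\omega/n}g)(x_0,u_0)=u_0^{1/n}V_{1/n}$ for a Koopman mode $V_{1/n}\in\mathbb{C}$. The order-1 pole gives the Laurent expansion
\[
\hat{y}(s;x_0,u_0)=\frac{u_0^{1/n}V_{1/n}}{s-\ii\omega/n}+(\text{a term holomorphic near }\ii\omega/n),
\]
using \eqref{eq:Laplace} to identify $\hat y$ with $[\R(s;\G\sub{forced})g](x_0,u_0)$. The inverse Laplace transform of the first term is $V_{1/n}u_0^{1/n}\ee^{\ii\omega t/n}=V_{1/n}(u_0\ee^{\ii\omega t})^{1/n}$. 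Comparing this with \eqref{eq:def_H_subharmonics} identifies $H_{1/n}(\omega;g,x_0,u_0)=V_{1/n}$. Multiplying the contour formula by $u_0^{-1/n}$, which is legitimate since $u_0>0$, gives the first representation. The limit formula is the standard residue computation for a simple pole.

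The main obstacle is justifying that the component of $y(t)$ contributed by the pole at $\ii\omega/n$ is exactly the steady-state subharmonic $y_{1/n}(t)$. This requires the remaining spectral contributions either to decay or to oscillate at frequencies other than $\omega/n$. A second, related difficulty is the well-definedness of $u^{1/n}$ as an element of $\setF$. This is where the assumption $u_0>0$, rather than $u_0\neq0$, and a consistent branch choice enter. I would handle both at the same level of rigor as the proof of Theorem~\ref{thm:FR}, by taking them as part of the hypotheses through the definition of steady state.
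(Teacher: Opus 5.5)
Your proposal matches the paper's proof, which is literally ``the same as for Theorem~\ref{thm:FR}'' with Lemma~\ref{lemma:point_subharmonics} in place of Lemma~\ref{lemma:point}. Both identify the contour integral with Kato's eigenprojection onto the span of $u^{1/n}$, write its value at $(x_0,u_0)$ as $u_0^{1/n}V_{1/n}$, use the order-1 Laurent expansion of the resolvent, and invert the Laplace transform to match \eqref{eq:def_H_subharmonics}. The caveats you flag (a one-dimensional eigenspace, the branch of $u^{1/n}$, and identifying the pole's contribution with the steady-state output) are left implicit in the paper, so your version is simply more explicit about the same assumptions.
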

\begin{proof}
The proof is the same as for Theorem~\ref{thm:FR}.
\end{proof}

The results of Theorems~\ref{thm:FR} and \ref{thm:FR_subharmonics} are related to the main message of this paper: the Koopman mode provides the frequency response of the nonlinear plant \eqref{eq:plant}.  
The amplitude (gain) and argument (phase) of ${\color{black}H_n(\omega; g, x_0, u_0)}$ and ${\color{black}H_{1/n}(\omega; g, x_0, u_0)}$ are visualized as the Bode plots in the classical control engineering and as the response curves in nonlinear oscillations: see, e.g., 
\cite{hayashi1964nonlinear,susuki2007energybased}. 
By using the frequency-domain formulation, it becomes possible to draw the phase plot. 
In addition, by clarifying the frequency response as the Koopman mode, its numerical estimation becomes possible by leveraging dynamic mode decomposition \cite{kutz2016dynamic}, which is a well-established numerical scheme within the Koopman operator framework. 

%
\begin{remark}
The definition is a generalization of the classical approach to LTI systems, as shown in Sections~\ref{subsubsec:linear} and \ref{subsec:linear}.
\end{remark}

\subsection{Solvable examples}
\label{subsec:examples}

\subsubsection{One-dimensional linear case}
\label{subsubsec:linear}
Consider the one-dimensional (1d) linear case $F(x)=ax+bu$ and $g(x,u)=x$ with the constant parameters $a,b\in\mathbb{R}$. 
The frequency response from $u$ to $y=x$ is known as $b/(\ii\omega -a)$. 
For the skew-product form, the (principal) Koopman eigenvalues are $a$ and $\ii\omega$, and the associated Koopman eigenfunctions are the following: 
\[
\left.
\begin{aligned}
\phi_{a}(x,u) &= -(\ii\omega-a)x+bu \\
\phi_{\ii\omega}(x,u) &=u
\end{aligned}
\right\},
\]
where $\phi_a$ depends on $\omega$. 
Thus, the decomposition of the time evolution $x(t)$ based on the Koopman eigenfunctions, which is called the Koopman mode decomposition \cite{rowley2009spectral,mezic2013analysis}, is derived as follows:
\[
\begin{aligned}
x(t; x_0,u_0)
&= \ee^{at}\phi_a(x_0,u_0)\frac{-1}{\ii\omega-a} \\
& + \ee^{\ii\omega t}\underbrace{\phi_{\ii\omega}(x_0,u_0)}_{u_0}
\frac{b}{\ii\omega-a},
\end{aligned}
\]
where both the Koopman modes $-1/(\ii\omega -a)$ and $b/(\ii\omega-a)$ depend on $\omega$. 
The Laplace transform $\hat{x}(s; x_0,u_0)$ of $x(t)$ is directly the following:
\[
\begin{aligned}
\hat{x}(s; x_0,u_0) 
&= \frac{1}{s-a}\phi_a(x_0,u_0)\frac{-1}{\ii\omega-a} \\
&+ \underbrace{\frac{1}{s-\ii\omega}\underbrace{\phi_{\ii\omega}(x_0,u_0)}_{u_0}}_{\hat{u}(s; u_0)}
\frac{b}{\ii\omega-a}.
\end{aligned}
\]
Theorem~\ref{thm:FR} suggests that the frequency response ${\color{black}H_1(\omega; g=x, x_0, u_0)}$ is $b/(\ii\omega-a)$, which corresponds to the frequency response in the linear theory.  

\subsubsection{Two-dimensional nonlinear case}
Consider the two-dimensional (2d) nonlinear system, given by
\begin{equation}
\left.
\begin{aligned}
\dot{x}_1 &= a_1x_1+x_2^2\\
\dot{x}_2 &= a_2x_2+u
\end{aligned}
\right\},
\label{eq:2d_nonlinear}
\end{equation}
where $a_1,a_2<0$ $(a_1\neq a_2)$ are the parameters . 
The associated skew-product form with $\omega>0$ has the three (principal) Koopman eigenvalues $a_1,a_2,\ii\omega$ and the associated Koopman eigenfunctions as follows:
\begin{equation}
\left.
\begin{aligned}
\phi_{a_1}(x_1,x_2,u) &= x_1+\frac{x_2^2}{a_1-2a_2} \\
& + \frac{2x_2u}{(a_1-2a_2)(a_1-a_2-\ii\omega)} \\
& +\frac{2u^2}{(a_1-2a_2)(a_1-a_2-\ii\omega)(a_1-\ii 2\omega)} \\
& \makebox[12em]{} (a_1\neq 2a_2)\\
\phi_{a_2}(x_1,x_2,u) &= x_2+\frac{u}{a_2-\ii\omega} \\
\phi_{\ii\omega}(x_1,x_2,u) &= u\\
\end{aligned}
\right\}.
\label{eq:2d_nonlinear_KEF}
\end{equation}
For the output $y=x_1$, calculating Koopman modes for $\lambda=\ii n\omega$ (see Appendix~\ref{appe:KMD}), we can derive the frequency responses at the Koopman eigenvalues  $\ii n\omega$ ($n\in\mathbb{N}$) as follows:
\[
\left.
\begin{aligned}
{\color{black}H_n(\omega; g=x_1, x_0, u_0)} 
&= 0 \quad (n=1,3,4,\ldots,)\\
{\color{black}H_2(\omega; g=x_1, x_0, u_0)}
&= \frac{1}{(\ii 2\omega-a_1)(\ii\omega-a_2)^2}
\end{aligned}
\right\}.
\]
This result is relevant because, clearly from the system's nonlinearity, the steady-state oscillation in $x_1(t)$ contains only the $2\omega$ component.  
For $y=x_2$, we can also derive the frequency responses at the angular frequencies $n\omega$ as follows:
\[
\left.
\begin{aligned}
{\color{black}H_1(\omega; g=x_2, x_0, u_0)} 
&= \frac{1}{\ii\omega-a_2}\\
{\color{black}H_n(\omega; g=x_2, x_0, u_0)} 
&= 0 \quad (n=2,3,\ldots)
\end{aligned}
\right\}.
\]
This result is also relevant because the steady-state oscillation in $x_2(t)$ contains only the $\omega$ component, according to the linear system $\dot{x}_2=a_2x_2+u$.  
The non-zero responses ${\color{black}H_2(\omega; g=x_1, x_0, u_0)}$ and ${\color{black}H_1(\omega; g=x_2, x_0, u_0)}$ are visualized as the Bode plots of gain and phase in Figure~\ref{fig:bode}. 
The response ${\color{black}H_2(\omega; g=x_1, x_0, u_0)}$ is due to the presence of the nonlinear term $x_2^2$ and behaves as the 3rd-order lag element. 
The response ${\color{black}H_1(\omega; g=x_2, x_0, u_0)}$ corresponds to that of the 1st-order lag element. 

\begin{figure}[t]
\centering
\includegraphics[width=0.48\textwidth]{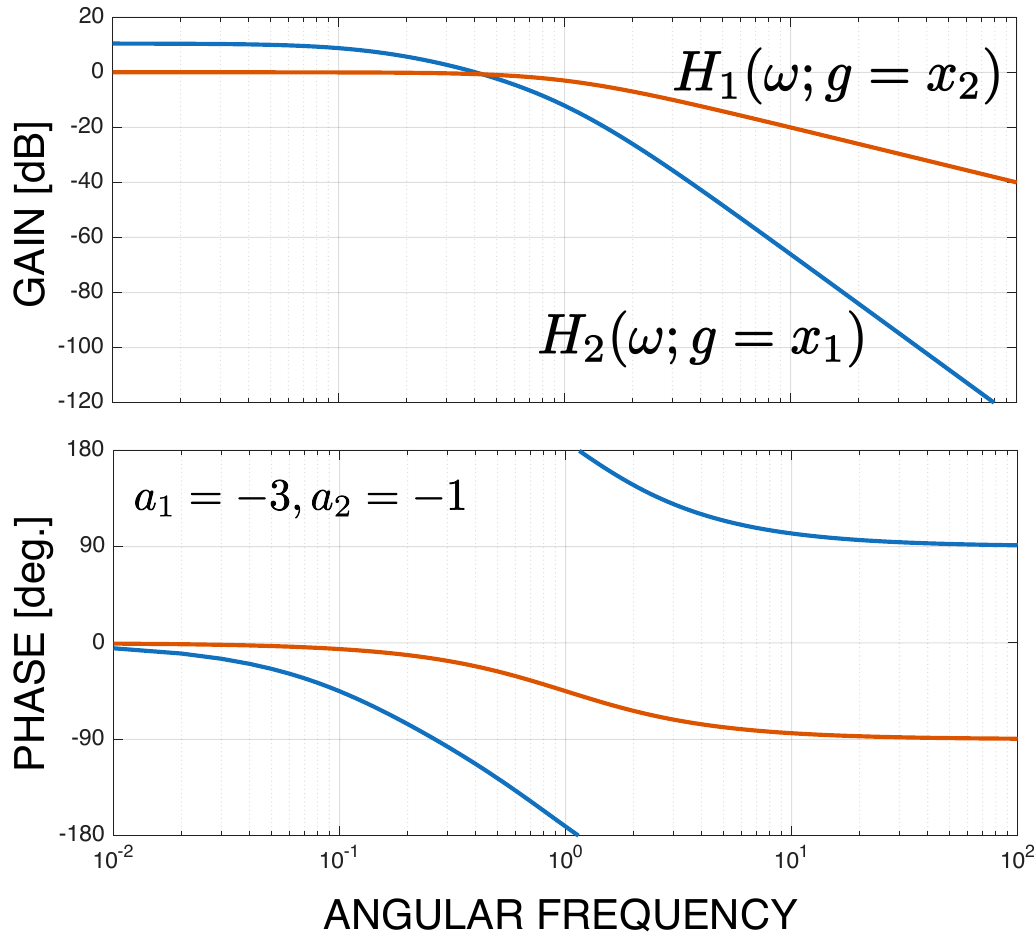}
\caption{Bode plots for frequency responses of the 2d nonlinear example: \emph{orange} for $\color{black}H_1(\omega; g=x_2, x_0, u_0)$ and \emph{blue} for $\color{black}H_2(\omega; g=x_1, x_0, u_0)$.}
\label{fig:bode}
\end{figure}

Here, we can recover the response ${\color{black}H_2(\omega; g=x_1, x_0, u_0)}$ by the Laplace transform of a lifted linear system.  
By introducing the new six states $z_1:=x_1, z_2:=x_2, z_3:=u, z_4:=x_2^2, z_5:=x_2u$, and $z_6:=u^2$, it is possible to lift the skew-product form of the nonlinear system \eqref{eq:2d_nonlinear} to the following linear system:
\begin{equation}
\begin{bmatrix}
\dot{z}_1 \\ \dot{z}_2 \\ \dot{z}_3 \\ \dot{z}_4 \\ \dot{z}_5 \\ \dot{z}_6
\end{bmatrix}
= 
\begin{bmatrix}
a_1 & 0 & 0 & 1 & 0 & 0 \\
0 & a_2 & 1 & 0 & 0 & 0 \\
0 & 0 & \ii\omega & 0 & 0 & 0 \\
0 & 0 & 0 & 2a_2 & 2 & 0 \\
0 & 0 & 0 & 0 & a_2+\ii\omega & 1 \\
0 & 0 & 0 & 0 & 0 & \ii 2\omega 
\end{bmatrix}
\begin{bmatrix}
z_1 \\ z_2 \\ z_3 \\ z_4 \\ z_5 \\ z_6
\end{bmatrix}.
\label{eq:2d_nonlinear_lifted}
\end{equation}
In this linear system, the transfer function $\hat{g}_{y(u^2)}(s)$ from the Laplace transform $\hat{z}_6$ of $z_6=u^2$ to the Laplace transform $\hat{z}_1$ of $z_1=x_1=y$ is derived as follows:
\[
\hat{g}_{y(u^2)}(s)=\frac{2}{(s-a_1)(s-2a_2)(s-(a_2+\ii\omega))}.
\] 
Then, we have $\hat{g}_{y(u^2)}(\ii 2\omega)={\color{black}H_2(\omega; g=x_1, x_0, u_0)}$. 
In addition, $\hat{g}_{yu}(s)=0$ holds from the linear system, which is consistent with ${\color{black}H_1(\omega; g=x_1, x_0, u_0)}=0$.


\section{Spectral characterizations}
\label{sec:spectral}

We try to answer a fundamental question: when can we apply Theorems~\ref{thm:FR} and \ref{thm:FR_subharmonics} to the analysis of frequency response? 
This is related to finding a sufficient condition where the Koopman resolvent has a pole of order 1 at the target angular frequency $n\omega$ or $\omega/n$. 
Here, we discuss three cases where the theorems can be applied: LTI dynamics, globally stable dynamics, and ergodic dynamics. 
The latter two dynamics are related to the classes where the Laplace transforms of the outputs are completely characterized by spectral expansions of the Koopman resolvents, which were reported in \cite{susuki2021koopmana}.

\subsection{LTI dynamics}
\label{subsec:linear}

Consider the linear case $F(x)=Ax+bu$ and $g(x,u)=c^\top x$ with the constant matrix $A\in\mathbb{R}^{d\times d}$ and vectors $b,c\in\mathbb{R}^d$. 
The symbol $\top$ stands for the transpose operation of vectors. 
Assume that $A$ has distinct $d$ eigenvalues $\lambda_1,\ldots,\lambda_d$ with negative real parts, where we can avoid the well-known condition of external resonance.  
Then, the skew-product form \eqref{eq:skew1} of the linear case has eigenvalues $\lambda_1,\ldots,\lambda_d$ and $\ii\omega$, as seen in the one-dimensional example, which are contained in the point spectrum $\sigma\sub{p}(\G\sub{forced})$ of the Koopman generator $\G\sub{forced}$. 
Then, under a suitable choice of the observable space $\setF$, for example, the space of analytic functions discussed later, we see that the Koopman eigenvalue $\ii\omega$ is simple and isolated, so that the theorems can be applied.
In fact, by finding the left and right eigenvectors of the eigenvalue $\ii\omega$ for the skew-product form as 
\begin{align}
[0,\ldots,0,1]^\top
\begin{bmatrix}
A & b \\ 0 & \ii\omega
\end{bmatrix}
&= \ii\omega\,
[0,\ldots,0,1]^\top, \nonumber\\
\begin{bmatrix}
A & b \\ 0 & \ii\omega
\end{bmatrix}
\begin{bmatrix} r_{\ii\omega} \\ 1 \end{bmatrix}
&= \ii\omega 
\begin{bmatrix} r_{\ii\omega} \\ 1 \end{bmatrix}, 
\quad r_{\ii\omega}\in\mathbb{C}^d,
\label{eq:right}
\end{align}
we can calculate the residue of the pole $\ii\omega$ for the Laplace transform $\displaystyle [c^\top, 0]\left[sI-\begin{bmatrix} A & b \\ 0 & \ii\omega \end{bmatrix}\right]^{-1}\begin{bmatrix} x_0 \\ u_0 \end{bmatrix}$ of the output as $u_0(c^\top r_{\ii\omega})$ and then derive the response $\color{black}H_1(\omega; g=c^\top x, x_0, u_0)$ as follows:
\[
{\color{black}H_1(\omega; g=c^\top x, x_0, u_0)}=c^\top r_{\ii\omega}.
\]
By using \eqref{eq:right} and the assumption $\ii\omega\notin\sigma\sub{p}(A)$ (spectrum of the matrix $A$), we can compute $r_{\ii\omega}=(\ii\omega I-A)^{-1} b$ and we have
\[
{\color{black}H_1(\omega; g=c^\top x, x_0, u_0)} = c^\top (\ii\omega I-A)^{-1} b.
\]
This shows that the frequency response derived in this paper is a generalization of the classical approach to LTI systems. 

\subsection{Globally stable dynamics}
\label{subsec:stable}

From a phenomenological point of view, we need to know that the nonlinear plant \eqref{eq:plant} with the input $u(t)=u_0\ee^{\ii\omega t}$ admits a stable periodic solution with an angular frequency $n\omega$ or $\omega/n$. 
The existence has been studied historically using various tools in systems theory, including the describing function \cite{khalil2002nonlinear}, contraction analysis \cite{lohmiller1998contraction,russo2010global}, and convergence analysis \cite{pavlov2007frequency}, among others; see Appendix~\ref{appe:existence}. 
If there exists such a periodic solution, that is to say, if there exists a stable limit cycle for the skew-product form \eqref{eq:skew1}, then, by making the additional assumption of analyticity of the vector fields and observables, it becomes possible to guarantee the existence of the pole for $\R(s; \G\sub{forced})$. 
As a global version, we present the following lemma.
\begin{lemma}
Consider the nonlinear plant \eqref{eq:plant} with the input $u(t)=u_0\ee^{\ii\omega t}$. 
Assume that there exists a globally stable periodic solution with the angular frequency $n\omega$. 
If the function $F(x,u)$ is analytic, and the observable space $\setF$ is a space of analytic functions, then $\ii n\omega$ is a pole of order 1 of the Koopman resolvent $\R(s; \G\sub{forced})$ for the skew-product form \eqref{eq:skew1}.
\end{lemma}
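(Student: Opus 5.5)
The plan is to reduce the statement to the spectral theory of Koopman operators for systems with a globally attracting limit cycle, and then to read off the resolvent near $\ii n\omega$. The results I have in mind are those of Mauroy--Mezi\'c, recalled in \cite{susuki2021koopmana}, for analytic vector fields on spaces of analytic observables.

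First, I would pass from the forced plant to the autonomous skew-product form \eqref{eq:skew1}. Here the phase variable $u=u_0\ee^{\ii\omega t}$ lives on the invariant circle $|u|=|u_0|$. The globally stable periodic solution with angular frequency $n\omega$ then corresponds to a globally attracting limit cycle $\Gamma$ of \eqref{eq:skew1}. Its period $T_\Gamma$ is compatible with the forcing period $T=2\pi/\omega$, and I would work with the fundamental frequency $\Omega=2\pi/T_\Gamma$ of $\Gamma$. Since $F$ is analytic and $\Gamma$ is hyperbolic (I will assume exponential stability, i.e.\ Floquet exponents $\mu_1,\dots,\mu_d$ with negative real parts), the Floquet normal form gives an analytic conjugacy in the basin. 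In these coordinates the Koopman eigenfunctions are the phase eigenfunctions $\ee^{\ii k\theta}$ with eigenvalues $\ii k\Omega$, together with products of principal Floquet eigenfunctions. The point spectrum is therefore the lattice
\[
\Bigl\{\ii k\Omega+\sum_{j=1}^d m_j\mu_j : k\in\mathbb{Z},\ m_j\in\mathbb{N}\cup\{0\}\Bigr\},
\]
and on the analytic observable space $\setF$ the spectrum of $\G\sub{forced}$ coincides with this set. Lemma~\ref{lemma:point} identifies $\ii n\omega$ as one element of it, namely the one with $k\Omega=n\omega$ and all $m_j=0$.

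Second, I would show that $\ii n\omega$ is isolated in $\sigma(\G\sub{forced})$ and has algebraic multiplicity one. Isolation follows because every other lattice point either lies on the imaginary axis at a distance at least $\Omega$, or has real part at most $\max_j\mathrm{Re}\,\mu_j<0$. Accumulation of such points at $\ii n\omega$ could only occur through combinations $\sum m_j\mu_j$ whose real parts tend to $0$, which is impossible when $m_j\ge 0$ and $\mathrm{Re}\,\mu_j<0$. For simplicity, any eigenfunction at $\ii n\omega$ must be constant along the stable fibres of $\Gamma$, since its modulus is invariant and analytic. Global attraction then forces it to be a multiple of the phase eigenfunction $\ee^{\ii k\theta}$, i.e.\ of $u^n$ up to the conjugacy. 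Because the restriction of the dynamics to $\Gamma$ is a rotation, no Jordan block (generalized eigenfunction) exists there.

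Third, I would use the spectral expansion of the Laplace transform for globally stable limit cycles from \cite{susuki2021koopmana}. For analytic $g$ it takes the form
\[
[\R(s;\G\sub{forced})g](x_0,u_0)=\sum_{\lambda}\frac{\phi_\lambda(x_0,u_0)V_\lambda}{s-\lambda}.
\]
This series converges for $s$ away from the lattice, so near $\ii n\omega$ only the term $u_0^nV_n/(s-\ii n\omega)$ is singular. That is exactly a pole of order 1, with the eigenprojection given by Kato's Riesz integral.

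The main obstacle is the functional-analytic step: making sure that the analytic space $\setF$ actually contains $u^n$ and the conjugated eigenfunctions. It must also be a space on which the expansion converges uniformly on a punctured neighbourhood of $\ii n\omega$, so that the resolvent is bounded there. I would handle this by choosing $\setF$ as the space of functions analytic on a polydisc neighbourhood of $\Gamma$ in the conjugating coordinates, following Mauroy--Mezi\'c. I would accept the analytic linearizability of the Floquet normal form as a standing hypothesis. This is where non-resonance among the $\mu_j$ may have to be assumed, as in the LTI case of Section~\ref{subsec:linear}.
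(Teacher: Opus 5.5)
Your proposal is correct and follows the paper's route. The paper passes to the globally attracting limit cycle of the skew-product form \eqref{eq:skew1} and invokes \cite[Theorem~8.1]{mezic2020spectrum} to conclude that $\ii n\omega$ is a simple, isolated eigenvalue of $\G\sub{forced}$, hence a first-order pole of the resolvent. You unpack that citation (lattice spectrum, isolation via $\mathrm{Re}\,\mu_j<0$, no Jordan chains, expansion-based pole order) and state openly the hyperbolicity and analytic-linearizability hypotheses that the citation carries implicitly.

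One small fix: constancy of an $\ii n\omega$-eigenfunction $\phi$ along isochrons does not follow from invariance of $|\phi|$, which only makes the modulus constant. It follows from continuity applied to $\phi(p_t)-\phi(q_t)=\ee^{\ii n\omega t}\bigl(\phi(p)-\phi(q)\bigr)$, using $p_t-q_t\to 0$.
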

\begin{proof}
The existence of a globally stable periodic solution implies that the associated globally stable limit cycle exists for the skew-product form \eqref{eq:skew1}. 
Then, by applying \cite[Theorem~8.1]{mezic2020spectrum} and the argument therein to the autonomous system with the globally stable limit cycle, it is shown that $\ii n\omega$ is a simple and isolated eigenvalue of the Koopman generator $\G\sub{forced}$.  
Hence, the conclusion holds. 
\end{proof}

\subsection{Ergodic dynamics}

For fixed $\omega$, the dynamics of \eqref{eq:skew1} are assumed to be on an attractor (e.g. periodic, quasi-periodic, or chaotic attactor). In the case of a compact ergodic attractor $\mathbb{A}$, the spectral expansion of the Koopman resolvent $\R(s; \G\sub{forced})$ in the $\setL^2$ space yields (see \cite{susuki2021koopmana})
\begin{equation}
\R(s; \G\sub{forced})g 
= \sum^\infty_{j=1}\frac{\P_{\ii\omega_j}g}{s-\ii\omega_{j}}
+ \int_\mathbb{R}\frac{\E(\dd\varphi)}{s-\ii\varphi}g, \quad 
\forall g\in \setL^2(\mathbb{A}),
\label{eq:expansion}
\end{equation}
where the projection operator $\P_{\ii\omega_j}$ is associated with $\ii\omega_{j}$, and $\E(\varphi)$ is a continuous spectral measure. 
The complex values $\ii\omega_{j}$ of the poles are the Koopman eigenvalues and simple from the ergodicity assumption. 
Lemma~\ref{lemma:point} implies that $\omega_j$'s contain $n\omega$.  
Here, let us assume that the eigenvalue $\ii n\omega$ is isolated. 
Then, it is a pole of order 1 for $\R(s; \G\sub{forced})$, so that the theorems can be applied. 
Thus, using Theorem~\ref{thm:FR} with \eqref{eq:expansion}, we can derive the frequency response as
\[
\begin{aligned}
{\color{black}H_n(\omega; g, x_0, u_0)} & \\
&\makebox[-15mm]{}= u_0^{-n}
\lim_{s\to\ii n\omega} (s-\ii n\omega)[\R(s; \G\sub{forced})g](x_0,u_0) \\
&\makebox[-15mm]{}= u_0^{-n}(\P_{\ii n\omega}g)(x_0,u_0).
\end{aligned}
\]
The subharmonic case for the eigenvalue $\ii\omega/n$ also holds in the same argument above. 
\begin{remark}
If the eigenvalue $\ii n\omega$ 
is not isolated (see the ergodic flow on the torus), then we are not able to directly apply the theorems to derive the frequency response. 
However, for the ergodic case, because the Koopman generator is a unitary operator acting on the $\setL^2$ space, the eigenprojection can be directly derived with the inner product equipped with the $\setL^2$ space. 
In this case, without assuming an isolated eigenvalue, it becomes possible to derive the frequency response. 
\end{remark}
\begin{remark}
Although beyond the scope of this paper, several extensions of the frequency response might be interesting for systems with point spectra only, mixed spectra, and continuous spectra, such as the weakly mixing case. 
The involvement of continuous spectrum in the theory of frequency response of nonlinear systems is our future direction. 
The ergodicity assumption excludes the case that two $\lambda_j$ are identical to $\ii\omega$, that is, the well-known condition of the external resonance phenomenon. 
\end{remark}

\section{Concluding remarks}
\label{sec:outro}

This paper proposed a novel formulation of the frequency response for the nonlinear systems. 
By utilizing the theory of Koopman resolvent presented in \cite{susuki2021koopmana}, we showed that the frequency response is derived through the Laplace transform of the output of a target nonlinear plant. 
The response is a complex-valued function of the driving angular frequency, allowing one to draw the so-called Bode plots, which display gain and phase characteristics. 
Sufficient conditions for the existence of frequency response were presented for the three classes of dynamics.  

We envision several perspectives for future research.
One is to analyze well-posedness, continuity, and differentiability of the frequency responses ${\color{black}H_n(\omega; g, x_0, u_0)}$ and ${\color{black}H_{1/n}(\omega; g, x_0, u_0)}$. 
This will be conducted with the Koopman theory for parametrized dynamical systems \cite{katayama2025}. 
Another perspective is to utilize ${\color{black}H_n(\omega; g, x_0, u_0)}$ and ${\color{black}H_{1/n}(\omega; g, x_0, u_0)}$ for nonlinear systems analysis, such as stability and passivity. 
Finally, an input-output relation or transfer function could be defined for the nonlinear plant.

\appendices

\section{Calculation of Koopman mode decompositions}
\label{appe:KMD}

The Koopman mode decompositions of $x_1(t)$ and $x_2(t)$ for the 2d nonlinear example are the following:
\[
\begin{aligned}
x_1(t) &= \ee^{a_1t}\phi_{a_1}(x(0),u(0)) \\
& +\ee^{2a_2t}\phi_{a_2}(x(0),u(0))^2\frac{-1}{a_1-2a_2} \\
& +\ee^{(a_2+\ii\omega)t}\phi_{a_2}(x(0),u(0))\phi_{\ii\omega}(x(0),u(0))\cdot \\
& \makebox[10em]{}\cdot \frac{2}{(\ii\omega-a_1+a_2)(\ii\omega-a_2)} \\
& +\ee^{\ii 2\omega t}\phi_{\ii\omega}(x(0),u(0))^2{\color{black}H_2(\omega; g=x_1, x_0, u_0)},\\
x_2(t) &= \ee^{a_2t}\phi_{a_2}(x(0),u(0)) \\
& +\ee^{\ii\omega t}\phi_{\ii\omega}(x(0),u(0)){\color{black}H_1(\omega; g=x_2, x_0, u_0)}.
\end{aligned}
\]

\section{Existence of globally stable periodic solutions}
\label{appe:existence}

\begin{lemma}
The nonlinear plant \eqref{eq:plant} with the input $u(t)=u_0\ee^{\ii\omega t}$ has a globally stable periodic solution with the angular frequency $\omega$ if one of the following conditions is satisfied, as appropriate:
\begin{itemize}
\item[(i)] The system $\dot{x}=F_0(x)(=F(x,u=0))$ is a time-invariant contracting system in the definition of \cite{lohmiller1998contraction}; 
\item[(ii)] The system $\dot{x}=F(x,u)$ is regular and uniformly convergent in the uniformly bounded steady-state property for the class of bounded continuous inputs, as in the definition of \cite{pavlov2007frequency}.
\end{itemize}
\end{lemma}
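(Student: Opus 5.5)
The proof splits into the two cases (i) and (ii). In each case we first obtain a unique, globally attracting solution of the forced plant \eqref{eq:plant} with $u(t)=u_0\ee^{\ii\omega t}$. We then show that this solution is periodic with period $T=2\pi/\omega$, using time-shift invariance of the skew-product form \eqref{eq:skew1}. Throughout, $u(t)$ is treated as a bounded, $T$-periodic, exogenous signal, and the $x$-equation is viewed as a non-autonomous system $\dot x=F(x,u(t))$.

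\textbf{Case (i).} Here I would read the hypothesis in the sense needed to apply \cite{lohmiller1998contraction}: contraction of $\dot x=F(x,u)$ in $x$, uniformly in $u$ over the bounded set $\{|u|=|u_0|\}$. In that setting it is stated for $F_0$ and extended, as appropriate, to $F(x,u)$ with a uniformly bounded input. Lohmiller--Slotine contraction gives a uniform negative definite bound on the symmetric part of $\partial F/\partial x$ in some metric. This implies that any two trajectories converge exponentially, $|x_a(t)-x_b(t)|\le K\ee^{-\beta t}|x_a(0)-x_b(0)|$.

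Next, define the period map $P:x_0\mapsto x(T;x_0)$. The input satisfies $u(t+T)=u(t)$, so the system is $T$-periodic. Hence $P^k$ is the time-$kT$ map, and exponential convergence makes $P^k$ a strict contraction for $k$ large. Banach's fixed point theorem then gives a unique fixed point $\bar x$ of $P$, whose orbit $\bar x(t)$ is $T$-periodic. The contraction estimate applied with $x_b=\bar x$ shows that this orbit attracts every solution, i.e.\ it is globally stable. Since it is a periodic orbit of the $T$-periodic system, its angular frequency is $\omega$ in the sense of Definition~\ref{def:FR} with $n=1$. This is exactly the situation in Section~\ref{subsec:stable}: a Poincar\'e map with a globally attracting fixed point.

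\textbf{Case (ii).} I would invoke the definition of \cite{pavlov2007frequency} directly. Uniform convergence with the uniformly bounded steady-state property means that, for each bounded continuous input $u(\cdot)$, there is a unique solution $\bar x_u(t)$ defined and bounded on $\mathbb{R}$, and that it is uniformly globally asymptotically stable. Take $u(t)=u_0\ee^{\ii\omega t}$ and consider $\tilde x(t):=\bar x_u(t+T)$. By periodicity of $u$, $\tilde x$ solves the same equation and is bounded on $\mathbb{R}$. Uniqueness of the bounded solution forces $\tilde x=\bar x_u$, so $\bar x_u$ is $T$-periodic. Global stability then follows from uniform convergence, and this is the standard Pavlov et al.\ argument that convergent systems inherit the period of the input.

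\textbf{Main obstacle.} The delicate step is case (i). Contraction of the unforced $F_0$ alone does not in general control $F(x,u)$ for $u\ne0$. I would therefore state, as the intended reading of ``as appropriate,'' that the contraction rate holds uniformly along the input range; alternatively, one can assume the input enters in a way that preserves the Jacobian bound, e.g.\ additively. A secondary point is that $u$ is complex-valued. I would handle this by treating $(\mathrm{Re}\,u,\mathrm{Im}\,u)$ as a real bounded periodic input, so that both cited frameworks apply verbatim.
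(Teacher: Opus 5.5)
Your proof is correct and is essentially the paper's proof with the citations unpacked. The paper simply cites Section~3.7(vi) of \cite{lohmiller1998contraction} or Theorem~2 of \cite{russo2010global} for (i), and Theorem~2 of \cite{pavlov2007frequency} for (ii); those results rest on exactly your two arguments, namely the period-map contraction with Banach's fixed point theorem, and the time shift combined with uniqueness of the bounded solution. Your caveat that (i) must mean contraction of the forced system $\dot x=F(x,u(t))$ uniformly in $t$, not merely of $F_0$, is precisely the hypothesis those cited theorems use.

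One small correction: you obtain a globally attracting $T$-periodic solution, not a response ``with $n=1$ in the sense of Definition~\ref{def:FR}.'' Such a solution generally contains all harmonics $n\omega$ and may have minimal period $T/n$. For example, $\dot x=-x+u^2$ is contracting, yet its steady state is $u_0^2e^{2i\omega t}/(1+2i\omega)$. So ``angular frequency $\omega$'' in the lemma should be read as ``period $2\pi/\omega$.''
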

\begin{proof}
For (i), see Section~3.7 (vi) of \cite{lohmiller1998contraction} or Theorem~2 of \cite{russo2010global};  
for (ii), see Theorem~2 of \cite{pavlov2007frequency}.
\end{proof}

\end{document}